\documentclass[10pt,letterpaper]{article}
\usepackage{etoolbox}
\usepackage{amsmath,amssymb,amsbsy}
\ifdef{\ClassIOPJournal}{}{\usepackage[margin=1in]{geometry}}
\usepackage{amsthm}
\usepackage{indentfirst}
\usepackage{xcolor}
\usepackage{graphicx}
\usepackage{microtype}
\usepackage{xurl}
\theoremstyle{plain}
\newtheorem{theorem}{Theorem}[section]
\newtheorem{lemma}[theorem]{Lemma}
\newtheorem{corollary}[theorem]{Corollary}
\theoremstyle{definition}
\newtheorem{definition}[theorem]{Definition}
\theoremstyle{remark}
\newtheorem{remark}[theorem]{Remark}

\newcommand{\dvec}{\boldsymbol{d}}
\newcommand{\nhat}{\hat{\mathbf{n}}}
\newcommand{\nz}{\hat{\mathbf{n}}_0}
\newcommand{\uhat}{\hat{\mathbf{u}}}
\newcommand{\vv}{\boldsymbol{v}}
\newcommand{\vhat}{\hat{\mathbf{v}}}
\newcommand{\av}{\boldsymbol{a}}
\newcommand{\aperp}{\boldsymbol{a}_{\perp}}
\newcommand{\aph}{\hat{\mathbf{a}}_{\perp}}
\newcommand{\Bh}{\hat{\mathbf{b}}}
\newcommand{\Av}{\boldsymbol{A}}
\newcommand{\sig}{\boldsymbol{\sigma}}
\newcommand{\ee}{\hat{\mathbf{e}}}
\newcommand{\dd}{\mathrm{d}}

\DeclareMathOperator{\atantwo}{atan2}

\newcommand{\PaperAbstract}{%
When the control field of a qubit, a polarization state, or a spin-$\tfrac12$ system is
swept through a level degeneracy, its direction traces an open curve on the Bloch sphere
whose endpoints are antipodal, and the geodesic rule for the open-path geometric phase
becomes ambiguous: infinitely many geodesics close the path, and different closures
enclose different solid angles. We resolve this ambiguity in closed form. A
coordinate-free monopole connection defines the open-path solid angle $\Omega[C]$
intrinsically, and displacing the degeneracy by $\epsilon\uhat$ closes the path with
enclosed solid angle $\Omega(\epsilon\uhat)=\Omega[C]+2\alpha+O(\epsilon)$, where $\alpha$
is the azimuth of the transverse part of $\uhat$ measured from the principal normal of the
control curve at the crossing. The identity between geometric phase and enclosed solid
angle therefore holds for exactly one closing geodesic---the great circle in the
osculating plane ($\alpha=0$)---supplied by the curvature at the degeneracy. Berry's $\pi$
invariant under reversal of the displacement and the values $\pm\pi/2$ under a reflection
symmetry follow as corollaries, and the pure-state limit of the finite-temperature Uhlmann
phase selects the osculating-plane closure automatically, turning the heuristic closing
rules of the open-path literature into a computable prescription.}
\newcommand{\PaperKeywords}{geometric phase, Berry phase, solid angle, open paths, geodesic rule, monopole connection, Uhlmann phase}

\begin{document}

\ifdef{\ClassIOPJournal}{%
  \articletype{Paper}
  \title{Geometric phase of open paths and a geodesic-selection rule at a level degeneracy}
  \author{Hyeonseok Yang$^{1,*}$ and Changsuk Noh$^1$}
  \affil{$^1$Department of Physics, Kyungpook National University, Daegu, South Korea}
  \affil{$^*$Author to whom any correspondence should be addressed.}
  \email{yhs1802@gmail.com}
  \keywords{\PaperKeywords}
  \begin{abstract}\PaperAbstract\end{abstract}
}{%
  \title{Geometric phase of open paths and a geodesic-selection rule at a level degeneracy}
  \author{Hyeonseok Yang\thanks{Corresponding author: yhs1802@gmail.com}~~and Changsuk Noh\\
          \small Department of Physics, Kyungpook National University, Daegu, South Korea}
  \date{\today}
  \maketitle
  \begin{abstract}\PaperAbstract\end{abstract}
  \vspace{1pc}\noindent{\it Keywords}: \PaperKeywords\par\vspace{1pc}
}

\section{Introduction}
\label{sec:intro}

For a closed curve $C$ on the Bloch (or Poincar\'e) sphere $S^{2}$, the relation between
geometric phase and solid angle is well established:
a spin-$\tfrac12$ system acquires the Berry
phase $\gamma_{g}[C]=-\tfrac12\Omega[C]$, where $\Omega[C]$ is the signed spherical area enclosed by $C$~\cite{Berry1984,Simon1983,Pancharatnam1956}. When the curve is instead
\emph{open}, the kinematic formulation defines its geometric phase by completing the curve with a geodesic arc connecting the endpoints~\cite{AharonovAnandan1987,SamuelBhandari1988,MukundaSimon1993}. Because parallel transport along this geodesic contributes no geometric phase, the open-path geometric phase equals minus one-half of the solid angle enclosed by the resulting closed curve. This is known as the \emph{geodesic rule}.

This construction, however, becomes ambiguous when the endpoints of the open curve are antipodal. Such a situation can arise naturally in systems with an effective two-level description, including a driven qubit, a spin-$\tfrac{1}{2}$ in a rotating field, and a light beam passing through a polarization singularity. Consider a two-level Hamiltonian
\begin{equation}
H(t)=-\mathbf{d}(t)\cdot\boldsymbol{\sigma},
\label{eqn:Hamiltonian}
\end{equation}
where $\boldsymbol{\sigma}=(\sigma_x,\sigma_y,\sigma_z)$ is the vector
of Pauli matrices, and $\dvec(t)\in\mathbb{R}^3$ is a smooth,
$T$-periodic control vector, $\dvec(t+T)=\dvec(t)$. Thus, the path traced by $\dvec(t)$ is a closed curve in the control-parameter space. We assume that this curve passes through the origin at a single
isolated point $t_0$ in each period
and that the velocity $\vv\equiv\dot{\mathbf{d}}(t_0)$ is nonzero. We refer to such a passage through the origin as a transversal crossing. At $t_0$, the two eigenvalues $\mp|\mathbf{d}(t)|$ coincide, and the Hamiltonian becomes degenerate. Let $\hat{\mathbf v}\equiv\vv/|\vv|$ denote the direction
in which the curve crosses the origin. 
In the vicinity of the crossing, $\dvec(t)\simeq\vv\,(t-t_{0})$
and therefore
\begin{equation}
\lim_{t\to t_0^\pm}\hat{\mathbf{n}}(t)=\pm\hat{\mathbf{v}},
\end{equation}
where $\hat{\mathbf{n}}(t)\equiv\mathbf{d}(t)/|\mathbf{d}(t)|$. Thus, as $t$ passes through $t_0$, $\hat{\mathbf{n}}(t)$ jumps discontinuously between two antipodal points on $S^2$. 
Since $\nhat(t)$ is undefined at $t_{0}$, its trajectory on $S^{2}$ is not a closed loop but an open curve $C$, with the antipodal endpoints $\pm\vhat$.
Traversed over one period, $C$ begins at $+\vhat$, immediately after the crossing, and ends at $-\vhat$, immediately before it. Between such antipodal endpoints the geodesic rule no longer singles out a closure: just as all meridians join the north and south poles, infinitely many great-circle arcs of equal length join $+\vhat$ to $-\vhat$.

At first sight, this ambiguity may appear harmless: a closing geodesic is commonly said to contribute no geometric phase. This argument is well defined for non-antipodal endpoints, which determine a unique shorter geodesic and correspond to non-orthogonal states with a well-defined Pancharatnam phase~\cite{Pancharatnam1956,SamuelBhandari1988}. For antipodal endpoints, however, the corresponding states are orthogonal, their relative phase is undefined, and infinitely many half-great-circle arcs are available. Different closures therefore produce closed loops with different solid angles and geometric phases. Although the relation $\gamma_g=-\tfrac{1}{2}\Omega$ remains valid for each closed loop, the geometric phase assigned to the original open path is not unique without an additional criterion. Selecting the appropriate closing geodesic thus carries physical content.

This ``which geodesic'' problem, together with the $\pm\pi$ phase jumps that occur when the evolving state becomes orthogonal to the initial one---geometrically, when the trajectory passes through the point antipodal to its starting point on $S^{2}$---has been recognized previously.
Rakhecha and Wagh showed for a two-level system that, as the path passes through this antipodal point, the geometric-phase jump is determined by the azimuthal separation between the two geodesic branches, yielding a jump of $\pm\pi$ for a smooth crossing~\cite{RakhechaWagh1996}.
More recently, a hierarchy of operational rules for the open-path Pancharatnam geometric phase was proposed, culminating in a prescription that selects the closing geodesic according to the direction from which the physical path approaches the endpoint~\cite{GarzaSotoHagen2023}. Predictions of the geodesic rule, including the associated sign changes and $\pi$ phase jumps, have also been verified experimentally~\cite{Zhou2020,Wagh1998}. These studies establish the operational prescription and its observable consequences, but a general coordinate-free derivation remains lacking.

The aim of this work is to develop a closed-form, coordinate-free treatment of the geodesic-selection problem, identify the mechanism underlying its dependence on the chosen closure, and formulate an explicit selection rule.
Our analysis is based on the coordinate-free monopole one-form connection,
\begin{align}
\mathcal{A} \equiv \Av\cdot\dd\nhat=\frac{\nz\cdot\bigl(\nhat\times\dd\nhat\bigr)}{1+\nz\cdot\nhat},
\end{align}
which allows us to obtain the following results:
\begin{enumerate}
\item[(i)] define the solid angle associated with an open curve $C$ having antipodal endpoints as a line integral over the curve alone, with its starting point taken as the base point, and define the corresponding open-path geometric phase as $\gamma_g[C]=-\tfrac12\Omega[C]$;

\item[(ii)] show that, among the infinitely many geodesic closures of the antipodal endpoints, a unique oriented geodesic arc satisfies the geodesic rule for the open-path geometric phase defined in (i), and identify it as the closure selected by the osculating plane of $\mathbf{d}(t)$ at the degeneracy;

\item[(iii)] introduce a regularization
$\mathbf{d}_{\epsilon}(t)=\mathbf{d}(t)+\epsilon\hat{\mathbf{u}}$
and show that the solid angle of the resulting nonsingular closed path satisfies $\Omega(\epsilon\hat{\mathbf{u}}) = \Omega[C]+2\alpha+O(\epsilon).$ Here, $\alpha$ is a signed angle between two directions transverse to the crossing: the
direction in which the shift $\epsilon\uhat$ carries the curve past the degeneracy, and the direction in which the curve itself is bending there;

\item[(iv)] recover the universal $\pi$ phase difference under reversal of the regularization direction and the values $\pm\pi/2$ under reflection symmetry;

\item[(v)] show that the pure-state limit of the finite-temperature Uhlmann
phase~\cite{Uhlmann1986,Viyuela2014,Andersson2016,Wang2023SciPost,Wang2025PRB}  automatically selects the
osculating-plane regularization $\alpha=0$.
\end{enumerate}

The deviation $2\alpha$ originates from the unequal contributions accumulated near the two antipodal endpoints as the regularized path avoids the degeneracy. To recover the open-path geometric phase defined above, the path should be closed using the oriented geodesic selected by the osculating plane, which is determined by the local curvature of $\mathbf{d}(t)$ at the degeneracy. Our analysis thus provides both a geometric explanation and an operational prescription. The rest of the manuscript is organized as follows. Sections~2--4 develop these results, Section~5 concludes, and Appendix~A presents numerical verification.

\section{The monopole connection on $S^{2}$}
\label{sec:monopole}

\subsection{Determining the connection}
\label{sec:conn}

To construct a connection suitable for treating open paths, we choose
a base point $\hat{\mathbf n}_{0}\in S^{2}$ and seek a one-form
$\mathcal{A} =\mathbf A\cdot d\hat{\mathbf n}$ whose circulation around a closed
curve equals the enclosed solid angle.
Since $\nhat$ is a unit vector,
$\dd\nhat$ lies in the tangent plane of $S^{2}$ and satisfies $\nhat\cdot\dd\nhat=0$.
Consequently, only the tangential component of $\Av$ contributes to $\mathcal{A}$:
adding an arbitrary radial term $h(\nhat)\,\nhat$ leaves the one-form unchanged, so we may
choose $\Av\cdot\nhat=0$ without loss of generality. Two tangential directions remain, the
polar (meridional) direction $\ee_{\theta}$ and the azimuthal one $\ee_{\phi}$, and we now
show that only the latter carries physical content.

Consider first the component of $\Av$ along $\ee_{\theta}$. Rotational symmetry about the $\nz$ axis restricts this component to the form $g(\theta)\,\ee_{\theta}$. Since
$\ee_{\theta}\cdot\dd\nhat=\dd\theta$, its contribution to the one-form is
$g(\theta)\,\dd\theta$, which is pure gauge: it leaves the curvature unchanged, since  $\dd[g(\theta)\,\dd\theta]=0$. Setting $g\equiv 0$ is thus a choice of gauge rather than a restriction. We adopt it here because it makes the great circles through $\pm\nz$
parallel-transport paths: on such a circle $\dd\nhat$ points along $\ee_{\theta}$, so with  $g\equiv0$ a geodesic joining any point $\nhat$ to the base point $\nz$ contributes nothing to the line integral. This property is established in Lemma~\ref{lem:geodesic} and underlies the geodesic-closure construction for open paths developed below.

The remaining tangential component lies along the azimuthal direction $\ee_{\phi}$ , which is parallel to $\nz\times\nhat$. Unlike $g$, its coefficient does affect the curvature and cannot be chosen freely; it is fixed uniquely by Stokes' theorem below. Rotational symmetry about the $\nz$
axis further requires this coefficient to depend only on the polar angle $\theta$, defined by
$\cos\theta\equiv\nz\cdot\nhat$. We therefore write
\begin{equation}
\Av=f(\theta)\,(\nz\times\nhat),\qquad \cos\theta\equiv\nz\cdot\nhat .
\label{eq:ansatz}
\end{equation}

To determine $f$, take $C_{\theta}$ to be the circle at constant polar angle $\theta$ about the 
 $\hat{\mathbf{n}}_0$ axis. The spherical cap it bounds has the solid angle
\begin{equation}
\Omega[C_{\theta}]=\int_{0}^{2\pi}\!\!\int_{0}^{\theta}\sin\theta'\,\dd\theta'\,\dd\phi
=2\pi(1-\cos\theta).
\label{eq:cap}
\end{equation}
On $C_{\theta}$ the magnitude $|\nz\times\nhat|=\sin\theta$ is constant, $\Av$ is parallel
to $\dd\nhat$, and the circumference is $2\pi\sin\theta$, so
\begin{equation}
\oint_{C_{\theta}}\Av\cdot\dd\nhat
=\big(f(\theta)\sin\theta\big)\big(2\pi\sin\theta\big)=2\pi f(\theta)\sin^{2}\theta .
\label{eq:circ}
\end{equation}
Equating~(\ref{eq:cap}) and~(\ref{eq:circ}) through Stokes' theorem gives
$f(\theta)=(1-\cos\theta)/\sin^{2}\theta=1/(1+\cos\theta)$, and restoring
$\cos\theta=\nz\cdot\nhat$ yields the coordinate-free form,
\begin{equation}
\Av(\nhat)=\frac{\nz\times\nhat}{1+\nz\cdot\nhat}.
\label{eq:A}
\end{equation}

\subsection{The one-form connection  and its curvature}
\label{sec:oneform}

Introduce spherical coordinates $(\theta,\phi)$ with $\nz$ at the north pole, i.e., $\nz=\hat{\mathbf z}$, so that
$\nhat=(\sin\theta\cos\phi,\sin\theta\sin\phi,\cos\theta)$. With
the right-handed orthonormal frame $\{\nhat,\ee_{\theta},\ee_{\phi}\}$ one has
$\nhat\times\ee_{\theta}=\ee_{\phi}$, $\nhat\times\ee_{\phi}=-\ee_{\theta}$, and
$\dd\nhat=\ee_{\theta}\,\dd\theta+\sin\theta\,\ee_{\phi}\,\dd\phi$. A short computation using
$\hat{\mathbf z}\cdot\ee_{\phi}=0$ and $\hat{\mathbf z}\cdot\ee_{\theta}=-\sin\theta$ gives the one-form \emph{connection}
\begin{equation}
\mathcal{A}=\frac{\nz\cdot(\nhat\times\dd\nhat)}{1+\nz\cdot\nhat}
=\frac{\sin^{2}\theta}{1+\cos\theta}\,\dd\phi=(1-\cos\theta)\,\dd\phi .
\label{eq:oneform}
\end{equation}
The exterior derivative of this is the coordinate-free two-form \emph{curvature},
$\dd\big[(1-\cos\theta)\,\dd\phi\big]=\sin\theta\,\dd\theta\wedge\dd\phi\equiv\dd\Omega$, so
for a closed curve $C=\partial\Sigma$,
\begin{equation}
\oint_{C}\Av\cdot\dd\nhat=\iint_{\Sigma}\dd\Omega=\Omega[C].
\label{eq:stokes}
\end{equation}

Before proceeding we fix the relation to the familiar Berry phase. For the two-level
Hamiltonian in equation (\ref{eqn:Hamiltonian}), we follow the eigenstate $|\psi_{-}(\nhat)\rangle$ of energy $-|\dvec|$ aligned with $\nhat$. A standard computation of its Berry connection in
the gauge with base point $\nz$ gives the familiar monopole form
$a_{B}=-\tfrac12(1-\cos\theta)\,\dd\phi$, which by equation~(\ref{eq:oneform}) is nothing but
\begin{equation}
a_{B}=-\tfrac12\,\mathcal{A} .
\label{eq:gg}
\end{equation}
The Berry connection and the one-form studied here therefore differ only by the constant factor $-\tfrac{1}{2}$, and we work with $\mathcal{A}$ in this paper.

\subsection{Gauge, base point and Dirac string}
\label{sec:gauge}

The curvature derived above is that of a monopole located at
$\mathbf{d}=0$. A gauge potential for a monopole cannot be chosen
smoothly over the entire sphere. In the gauge defined by
$\hat{\mathbf{n}}_0$, the connection~(\ref{eq:A})
is regular everywhere except at
$\hat{\mathbf{n}}=-\hat{\mathbf{n}}_0$, where its denominator vanishes.
This point is the intersection of the unit sphere with the Dirac
string~\cite{Dirac1931}, an unphysical line singularity of the monopole gauge potential
in the full parameter space. Changing $\hat{\mathbf{n}}_0$ moves this
singularity and therefore corresponds to a change of gauge.
For a closed
curve, $\Omega$ is independent of $\nz$ by equation~(\ref{eq:stokes}) (exactly so if the curve does
not cross the string, and up to an integer multiple of $4\pi$ otherwise). For an
\emph{open} curve, however, a change of base point acts as a gauge transformation
$a_{B}\to a_{B}+\dd\chi$ that leaves a boundary term
$\int_{C}\dd\chi=\chi(\text{end})-\chi(\text{start})\neq0$. The choice of base point is thus
an additional gauge---a path-dependent term---for open curves, and the value of $\Omega[C]$
depends on it. It is therefore natural to tie the base point to the curve itself. In the
situation of interest this choice is essentially forced: taking $\nz=+\vhat$ (the starting
direction) places the string at the opposite endpoint $-\vhat$, so that $C$ terminates
exactly on the string. As we show in section~\ref{sec:openpath}, this is precisely what
renders the non-uniqueness of the closing geodesic harmless, and we adopt
$\nz=\pm\vhat$ henceforth (Definition~\ref{def:omega}).


\section{Solid angle of an open path with antipodal endpoints}
\label{sec:openpath}

\subsection{Geodesic rule and intrinsic definition}
\label{sec:georule}

\begin{lemma}[coordinate-free geodesic rule]
\label{lem:geodesic}
Along the geodesic (great-circle) arc from any $\nhat\in S^{2}$ to the base point $\nz$ one has $\Av\cdot\dd\nhat=0$ identically; consequently a geodesic arc contributes nothing to the
geometric phase.
\end{lemma}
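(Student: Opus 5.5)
The plan is to verify the vanishing of the integrand pointwise, using the coordinate-free form $\mathcal{A}=\nz\cdot(\nhat\times\dd\nhat)/(1+\nz\cdot\nhat)$ from equation~(\ref{eq:oneform}). For $\nhat\neq\pm\nz$, the great circle through $\nhat$ and $\nz$ lies in the plane $P$ spanned by $\nz$ and $\nhat$. The geodesic arc from $\nhat$ to $\nz$ can be parametrized as
\begin{equation*}
\nhat(s)=\cos s\,\nz+\sin s\,\ee,\qquad s\in[0,\theta],
\end{equation*}
where $\ee$ is the unit vector in $P$ orthogonal to $\nz$ and $\theta$ is the polar angle of the initial point; this arc is then traversed in reverse. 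Orientation does not matter, since we will show the integrand is identically zero.

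Along the arc, both $\nhat(s)$ and $\dd\nhat=(-\sin s\,\nz+\cos s\,\ee)\,\dd s$ lie in $P$. Hence $\nhat\times\dd\nhat$ is normal to $P$, while $\nz$ lies in $P$, so the triple product $\nz\cdot(\nhat\times\dd\nhat)$ vanishes. Equivalently, $\Av\propto\nz\times\nhat$ is normal to $P$ and therefore orthogonal to $\dd\nhat\in P$. This matches the coordinate statement: on a meridian $\dd\phi=0$, so $(1-\cos\theta)\,\dd\phi=0$.

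The only delicate points are the degenerate endpoints.
\begin{itemize}
\item[(a)] At $\nhat=\nz$ the arc is trivial, and $\Av(\nz)=0$ anyway.
\item[(b)] At $\nhat=-\nz$, which lies on the Dirac string, the denominator $1+\nz\cdot\nhat$ vanishes and the geodesic is not unique: every half great circle through $\pm\nz$ qualifies. Here I would argue that each such half meridian still lies in a plane containing $\nz$, so the numerator vanishes identically on the open arc. The denominator is nonzero for $s\in(0,\pi)$, so the integrand is zero there. The integral is then defined as the limit of integrals over $[\delta,\pi]$, each of which is zero, and this limit is $0$ regardless of which meridian is chosen.
\end{itemize}

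I expect step (b), the endpoint on the string, to be the only obstacle. It is handled by this improper-integral/limit argument, which is exactly what later makes the choice among the infinitely many antipodal closures irrelevant for the base-point-attached definition.

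Finally, by equation~(\ref{eq:gg}) the Berry connection is $a_B=-\tfrac12\mathcal{A}$, so it also vanishes along the arc. Therefore the geodesic contributes nothing to the geometric phase.
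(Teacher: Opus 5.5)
Your proof is correct and follows the paper's argument: the arc and its tangent lie in $\operatorname{span}\{\nz,\nhat\}$, while $\Av\propto\nz\times\nhat$ is normal to that plane. Your case (b) for $\nhat=-\nz$ is what the paper treats separately in Lemma~\ref{lem:antipodal} via the plane $\operatorname{span}\{\nz,\hat{\mathbf m}\}$, but in your parametrization the singular point is $s=\pi$, so the excised interval should be $[0,\pi-\delta]$, not $[\delta,\pi]$; this slip is harmless because the numerator vanishes identically.
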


\begin{proof}
The geodesic lies in the plane $\Pi_{0}\equiv\operatorname{span}\{\nz,\nhat\}$. Any point
$\nhat'$ on it, and its tangent $\dd\nhat$, lie in $\Pi_{0}$, whereas
$\Av\propto\nz\times\nhat'$ is by construction orthogonal to $\Pi_{0}$. The inner product of a vector normal to a plane with one lying in it vanishes, so $\Av\cdot\dd\nhat=0$.
\end{proof}

\begin{definition}[open-path solid angle]
\label{def:omega}
For an open curve $C\subset S^{2}$ starting at $\nz$, define
\begin{equation}
\Omega[C]\equiv\int_{C}\Av\cdot\dd\nhat
=\int_{C}\frac{\nz\cdot(\nhat\times\dd\nhat)}{1+\nz\cdot\nhat},
\label{eq:defomega}
\end{equation}
with the base point taken to be the starting point of the curve.
\end{definition}

The line integral~(\ref{eq:defomega}) admits a purely geometric reading. Discretising $C$
as $\{\nhat^{(0)}=\nz,\allowbreak\nhat^{(1)},\allowbreak\dots,\allowbreak\nhat^{(M)}\}$ and letting $\Delta_{i}$ be the signed
solid angle, seen from the centre of the sphere, of the spherical triangle
$(\nz,\nhat^{(i)},\nhat^{(i+1)})$, one has $\Omega[C]=\lim_{\max|\dd\nhat|\to0}\sum_{i}\Delta_{i}$.
Indeed, the signed solid angle of a spherical triangle is given by the van
Oosterom--Strackee formula~\cite{VanOosteromStrackee1983},
\begin{equation}
\Delta(\boldsymbol \alpha,\boldsymbol \beta,\boldsymbol \gamma)
=2\arctan\frac{\boldsymbol \alpha\cdot(\boldsymbol \beta\times\boldsymbol \gamma)}
{1+\boldsymbol \alpha\cdot\boldsymbol \beta+\boldsymbol \beta\cdot\boldsymbol \gamma+\boldsymbol \gamma\cdot\boldsymbol \alpha},
\label{eq:vos}
\end{equation}
and inserting $\boldsymbol \alpha=\nz$, $\boldsymbol \beta=\nhat$, $\boldsymbol \gamma=\nhat+\dd\nhat$ and
expanding to first order reproduces the integrand of equation~(\ref{eq:defomega}). The
``solid angle seen from the centre'', the ``base-point triangulation'', and the ``one-form
line integral'' therefore coincide.

\subsection{Independence of the closing geodesic}
\label{sec:antipodal}

\begin{lemma}[antipodal endpoints]
\label{lem:antipodal}
If the endpoint of $C$ is the antipode $-\nz$ of the base point, then any geodesic joining $-\nz$ to $\nz$ contributes zero to the line integral in equation~(\ref{eq:defomega}). Hence the line
integral $\Omega[C]$ is determined by the curve $C$ alone, independently of the closing
path.
\end{lemma}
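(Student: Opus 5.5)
The strategy is to reduce the statement to Lemma~\ref{lem:geodesic}, applied to the geodesic run in reverse and taken to its endpoint. Any great-circle arc joining $-\nz$ to $\nz$ is a half great circle. Its plane contains both $\nz$ and $-\nz$, so it contains the axis through $\nz$. Therefore the arc is a meridian $\phi=\text{const}$ in the spherical coordinates of section~\ref{sec:oneform}, with $\nz$ at the north pole.

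\textbf{Step 1: the interior of the arc.} At every interior point $\nhat'$ of the arc, $\nhat'\neq\pm\nz$. The plane $\operatorname{span}\{\nz,\nhat'\}$ is then well defined and coincides with the plane of the great circle. The argument of Lemma~\ref{lem:geodesic} applies verbatim: $\Av\propto\nz\times\nhat'$ is normal to that plane, while $\dd\nhat$ lies in it, so $\Av\cdot\dd\nhat=0$. Equivalently, equation~(\ref{eq:oneform}) gives $\mathcal{A}=(1-\cos\theta)\,\dd\phi$, and $\dd\phi=0$ along a meridian. Hence the integrand vanishes identically on the open arc.

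\textbf{Step 2: the endpoint on the Dirac string (main obstacle).} The one delicate point is the endpoint $-\nz$, where the denominator $1+\nz\cdot\nhat$ vanishes. I would handle it as an improper integral. Truncate the arc at polar angle $\theta=\pi-\delta$, note that the truncated integral is exactly zero for every $\delta>0$, and let $\delta\to0$; the limit is zero.

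To make this robust, I would also bound the integrand directly. The magnitude of $\Av$ is $|\nz\times\nhat|/(1+\nz\cdot\nhat)=\sin\theta/(1+\cos\theta)=\cot(\delta/2)$ near the string, so it diverges. However, $\Av$ points along $\ee_\phi$, and $\ee_\phi$ is exactly orthogonal to the meridian tangent $\ee_\theta$. The product is therefore $0$, not an indeterminate form. The endpoint $\nz$ itself is regular, since $\Av(\nz)=0$.

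\textbf{Step 3: independence of the closure.} For the curve $C$ from $\nz$ to $-\nz$, Definition~\ref{def:omega} already defines $\Omega[C]$ by an integral over $C$ alone. The same truncation argument makes this integral meaningful near the terminal point on the string. Closing $C$ with any meridian $\Gamma_\phi$ from $-\nz$ back to $\nz$ adds zero by Steps 1--2. Hence $\oint_{C\cup\Gamma_\phi}\mathcal{A}=\Omega[C]$ for every $\phi$, and the closed-loop value is independent of which geodesic is chosen.

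I would add one remark. The geometric solid angles of the different closed loops $C\cup\Gamma_\phi$ do differ. Each closure path ends on the Dirac string, so Stokes' relation~(\ref{eq:stokes}) holds only modulo the string contribution. Reconciling this with the $\phi$-independence above is precisely what the later sections address, and it does not affect the statement of the lemma.
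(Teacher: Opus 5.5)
Your argument is correct and is essentially the paper's proof: any great circle through $\pm\nz$ lies in a plane containing $\nz$, so Lemma~\ref{lem:geodesic} makes the integrand vanish pointwise, and your Step~2 treatment of the endpoint on the Dirac string (where $|\Av|$ diverges but $\Av\perp\dd\nhat$ exactly) is a refinement the paper leaves implicit. One minor overstatement is in Step~3: for $C$ itself, truncation alone does not guarantee that the integral converges, because $\mathcal{A}\approx2\,\dd\phi$ near $-\nz$ and convergence therefore needs the azimuth of $C$ to have a limit there, which holds in this setting by~(\ref{eq:nlocal}) (cf.\ Lemma~\ref{lem:bare}).
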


\begin{proof}
A great circle through $\pm\nz$ is the intersection of the sphere with a plane
$\operatorname{span}\{\nz,\hat{\mathbf m}\}$ for some unit $\hat{\mathbf m}\perp\nz$. This
plane contains $\nz$, so Lemma~\ref{lem:geodesic} applies at each of its points; since
$\hat{\mathbf m}$ was arbitrary, the claim follows.
\end{proof}

\begin{remark}
\label{rem:notmean}
Lemma~\ref{lem:antipodal} states that the \emph{line integral} $\Omega[C]$ is intrinsic to
$C$; it does not state that the solid angle \emph{enclosed} after closing is independent of
the chosen geodesic. Two different closings bound a spherical lune between them, whose area
on the unit sphere equals twice the angle $\alpha$ at which the two great circles meet at the shared endpoints. Consequently the identity ``geometric phase $=$ enclosed solid angle'' holds for exactly one closing geodesic, and $\alpha$---defined as the azimuth of the chosen closing relative to that canonical one---measures the failure of the identity for any other choice. Identifying the canonical geodesic (the osculating-plane great circle, $\alpha=0$) and proving the deviation $2\alpha$ is the subject of section~\ref{sec:main}, and gives a closed form to the operational selection rule of~\cite{RakhechaWagh1996,GarzaSotoHagen2023}. Finally, the base point is placed at the
start of $C$ so that no closing geodesic passes through $\nz$ and the line integral carries
no spurious endpoint contribution.
\end{remark}

\subsection{Osculating plane and Frenet frame}
\label{sec:frenet}

The point $\dvec(t_{0})=\mathbf0$ (the degeneracy) is, as noted in section~\ref{sec:intro},
where $\nhat$ jumps between the antipodes $\mp\vhat$; the ``degeneracy'' and the ``antipodal
endpoints of the $\nhat$-trajectory'' are the same event. Set $\dvec(t_{0})=\mathbf0$,
$\vv\equiv\dot\dvec(t_{0})\neq\mathbf0$, $\av\equiv\ddot\dvec(t_{0})$ and $s\equiv t-t_{0}$.
Then $\dvec(t_{0}+s)=\vv s+\tfrac12\av s^{2}+O(s^{3})$,
\begin{equation}
\nhat(s)=\operatorname{sgn}(s)\Big[\vhat+\frac{s}{2|\vv|}\aperp\Big]+O(s^{2}),
\label{eq:nlocal}
\end{equation}
where
\begin{equation}
\aperp\equiv\av-a_{\parallel}\,\vhat,\qquad a_{\parallel}\equiv\av\cdot\vhat,
\end{equation}
so that $\nhat\to\pm\vhat$ as $s\to0^{\pm}$: the direction field jumps between antipodes.

\begin{figure}
    \centering
    \includegraphics[width=0.85\linewidth]{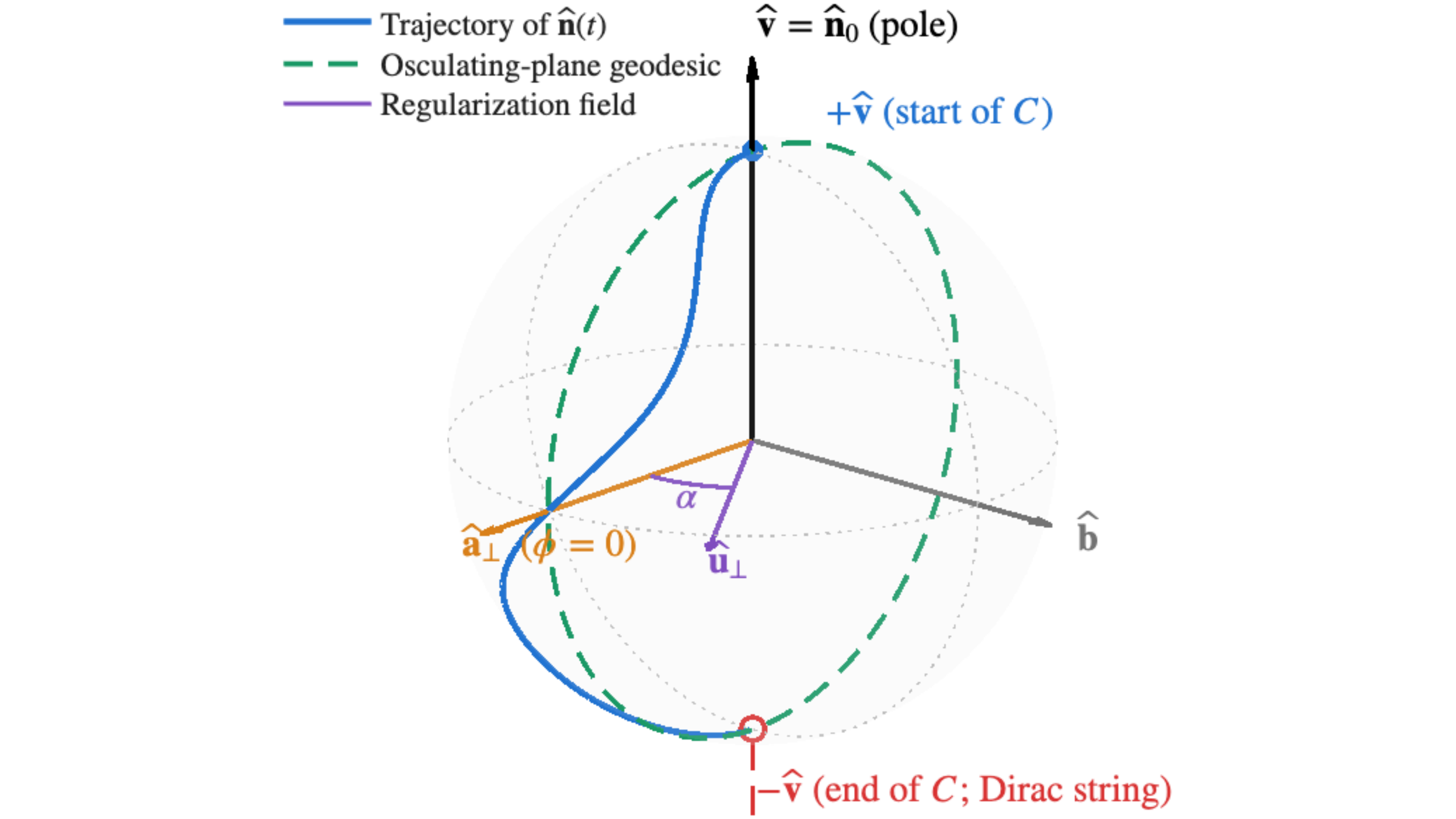}
    \caption{Shown for the example field $\dvec(t)=\bigl[1-\cos t,\ \tfrac12\sin t\,(1-\cos t),\ \sin t\bigr]$ with degeneracy at $t_{0}=0$. The direction field $\nhat=\dvec/|\dvec|$ traces an open curve $C$ (solid blue): the endpoints are the antipodal pair $\pm\vhat$, with base point $\nz=+\vhat$ and the Dirac string at $-\vhat$ (dashed red). The Frenet frame $\{\vhat,\aph,\Bh\}$ sets pole and azimuthal origin; the osculating-plane geodesic (dashed green) is tangent to $C$ at both endpoints ($\phi=0$), and a regularization $\uhat_{\perp}$ (purple) deviates from it by the azimuth $\alpha$.}
    \label{fig:frenet}
\end{figure}

Assume $\aperp\neq\mathbf0$ (equivalently $\av \nparallel \vv$), i.e., that the curve $\mathbf{d}(t)$ has nonzero curvature at $t_0$.
Introduce the Frenet frame at $t_{0}$, with orthonormal basis consisting of the tangent $\vhat$, principal normal $\aph=\aperp/|\aperp|$, and binormal $\Bh=\vhat\times\aph$, and let $\Pi_{\rm osc}\equiv\operatorname{span}\{\vhat,\aph\}
=\operatorname{span}\{\vv,\av\}$ be the osculating plane.
The geometry is illustrated in figure~\ref{fig:frenet}. Since the $\Bh$-component of $\dvec$ is $O(s^{3})$, the curve lies
in $\Pi_{\rm osc}$ to second order. Decomposing $\dvec$ in the Frenet frame, $\dvec(s)=\xi(s)\,\vhat+\eta(s)\,\aph+\zeta(s)\,\Bh$,
with components
\begin{equation}
\begin{aligned}
\xi &\equiv \dvec\cdot\vhat \;=\; |\vv|\,s+\tfrac12 a_{\parallel}s^{2}+O(s^{3}),\\
\eta &\equiv \dvec\cdot\aph \;=\; \tfrac12|\aperp|\,s^{2}+O(s^{3}),\\
\zeta &\equiv \dvec\cdot\Bh \;=\; O(s^{3}).
\end{aligned}
\label{eq:split}
\end{equation}
To leading order the curve is the parabola
\begin{equation}
\eta=\frac{|\aperp|}{2|\vv|^{2}}\,\xi^{2},
\label{eq:parabola}
\end{equation}
with vertex at the origin, opening towards $+\aph$.

We now fix the angular variables used below. The pole of the spherical coordinates has already been tied to the curve, $\nz=\vhat$; we further choose the azimuthal origin in the transverse plane
spanned by $(\aph,\Bh)$ so that $\phi=0$ along $+\aph$ and $\phi=\pi/2$ along $+\Bh$. The
transverse part of $\nhat$---its projection onto this plane,
$\nhat-(\nhat\cdot\vhat)\,\vhat=(\eta\,\aph+\zeta\,\Bh)/|\dvec|$ with $\eta,\zeta$ as in
equation~(\ref{eq:split})---then has azimuth $\phi=\atantwo(\zeta,\eta)$, which is what we
mean by the \emph{transverse azimuth} of $\nhat$. We call the trajectory of the
unregularized direction field $\nhat(s)$ of equation~(\ref{eq:nlocal}) the \emph{bare}
curve, in contradistinction to the regularized field introduced later, and write $\phi_{\rm bare}$ for its azimuth.

\begin{lemma}[harmlessness of the degeneracy]
\label{lem:bare}
Near the degeneracy the transverse azimuth of the bare curve is
\begin{equation}
\phi_{\rm bare}(s)=\frac{b_{3}}{3\,|\aperp|}\,s+O(s^{2}),
\qquad b_{3}\equiv\dddot{\dvec}(t_{0})\cdot\Bh ,
\label{eq:barephi}
\end{equation}
Regardless of $b_{3}$, the contribution of any
neighbourhood $|s|<w$ of the degeneracy to $\Omega[C]$ is $O(w)$ and hence vanishes as
$w\to0$: the degeneracy contributes nothing to the intrinsic solid angle.
\end{lemma}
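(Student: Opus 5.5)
The plan has two parts: first extend the Frenet-frame expansion (\ref{eq:split}) by one order so that $\zeta$ is known to leading order, then feed the resulting transverse azimuth into the spherical-coordinate form (\ref{eq:oneform}) of the connection with pole $\nz=\vhat$.

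\textbf{Azimuth.} Expand $\dvec(t_0+s)=\vv s+\tfrac12\av s^2+\tfrac16\dddot{\dvec}(t_0)s^3+O(s^4)$ and project onto $\Bh$. Since $\vv\cdot\Bh=0$ and $\av\cdot\Bh=\aperp\cdot\Bh=0$, the first nonzero term is
\begin{equation*}
\zeta=\tfrac16 b_3 s^3+O(s^4).
\end{equation*}
From (\ref{eq:split}) we already have $\eta=\tfrac12|\aperp|s^2+O(s^3)$. Both components share the factor $1/|\dvec|$, so the transverse azimuth is $\phi_{\rm bare}=\atantwo(\zeta,\eta)$. For small $s\neq0$ we have $\eta>0$, so $\phi_{\rm bare}=\arctan(\zeta/\eta)$ with
\begin{equation*}
\zeta/\eta=\frac{b_3 s}{3|\aperp|}+O(s^2).
\end{equation*}
This gives (\ref{eq:barephi}). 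The point to note is that $\eta>0$ on both sides of $s=0$: the parabola (\ref{eq:parabola}) opens toward $+\aph$ whichever way it is traversed. Hence $\phi_{\rm bare}$ tends to $0$ from both sides and extends to a $C^1$ function of $s$, with $\dot\phi_{\rm bare}=O(1)$ uniformly near the crossing.

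\textbf{Contribution near $s=0^+$.} Use (\ref{eq:oneform}) with pole $\nz=\vhat$:
\begin{equation*}
\int_{0<s<w}\mathcal{A}=\int_0^w(1-\cos\theta)\,\dot\phi_{\rm bare}\,\dd s .
\end{equation*}
Here $1-\cos\theta\le 2$ and $\dot\phi_{\rm bare}$ is bounded, so the integral is $O(w)$. In fact it is much smaller, because $\theta=O(s)$ there, making $1-\cos\theta=O(s^2)$.

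\textbf{Contribution near $s=0^-$ (the main obstacle).} Here $\nhat\to-\vhat$, which lies on the Dirac string, so $1-\cos\theta\to2$ and the connection is singular in Cartesian form. The naive worry is that $\dd\phi$ might wind an unbounded number of times as the curve approaches the string. The azimuth lemma rules this out. With the azimuth measured in the fixed transverse plane of $(\aph,\Bh)$, $\phi_{\rm bare}$ is a well-defined $C^1$ function of $s$ up to $s=0^-$, so the integrand $(1-\cos\theta)\dot\phi_{\rm bare}$ is bounded by $2\sup|\dot\phi_{\rm bare}|$. The integral over $-w<s<0$ is therefore $O(w)$.

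One subtlety must be checked in writing the proof. The azimuth defined via the projection $\nhat-(\nhat\cdot\vhat)\vhat$ must coincide with the $\phi$ of the spherical coordinates centred at $\nz=\vhat$ on both hemispheres. It does, since the transverse projection is the same for $\nhat$ and its reflection through the $\aph$--$\Bh$ plane.

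Summing the two sides, the neighbourhood $|s|<w$ contributes $O(w)$ for any value of $b_3$. This quantity vanishes as $w\to0$, which proves the claim.
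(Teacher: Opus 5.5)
Your proposal is correct and follows the paper's proof essentially step for step. It uses the same one-order-higher Frenet expansion, giving $\zeta=\tfrac16 b_3 s^3$ and $\eta>0$ on both branches, and the same branch-by-branch estimate of $(1-\cos\theta)\,\dd\phi$, with weight $O(s^2)$ near $+\vhat$ and bounded weight near $-\vhat$. The one difference is on the $s<0$ branch: the paper evaluates the integral sharply as $2[\phi_{\rm bare}(0)-\phi_{\rm bare}(-w)]+O(w^{3})$, whereas you only bound it by $2\sup|\dot\phi_{\rm bare}|\,w$. The paper needs the sharp form because it reuses it in the proof of Theorem~\ref{thm:main}.
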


\begin{proof}
Carrying the expansion in equation~(\ref{eq:split}) one order further, the binormal component is
$\zeta(s)=\tfrac16 b_{3}s^{3}+O(s^{4})$, while $\eta(s)=\tfrac12|\aperp|s^{2}+O(s^{3})>0$ on
both branches. Hence
$\phi_{\rm bare}=\atantwo(\zeta,\eta)=\zeta/\eta+\cdots=\frac{b_{3}}{3|\aperp|}\,s+O(s^{2})$,
an odd function that passes through zero with finite slope rather than remaining pinned at
$0$. For the contribution to $\Omega[C]$ through the one-form in equation~(\ref{eq:oneform}), split the
window $|s|<w$ by branch. On the branch $s>0$ the polar angle is
$\theta\simeq(|\aperp|/2|\vv|)\,s\to0$, so the weight is $1-\cos\theta=O(s^{2})$; with
$\dd\phi_{\rm bare}=O(1)\,\dd s$ this branch contributes $O(w^{3})$. On the branch $s<0$ the
polar angle approaches $\pi$, so the weight is $2-O(s^{2})$, and the branch contributes
$2\big[\phi_{\rm bare}(0)-\phi_{\rm bare}(-w)\big]+O(w^{3})
=\frac{2b_{3}}{3|\aperp|}\,w+O(w^{2})=O(w)$. The total contribution goes to $0$ as the size of the window tends to zero, $w\to0$.
\end{proof}

The coefficient $b_{3}$ is proportional to the Frenet torsion of $\dvec$ at the crossing,
so the azimuth is frozen at $0$ precisely for the planar (reflection-symmetric) case of
Corollary~\ref{cor:reflection}.
The parabola in equation~(\ref{eq:parabola}) and Lemma~\ref{lem:bare} together carry the two facts on
which the rest of the paper turns. First, both branches ($s\gtrless0$) of the curve lie on the $+\aph$ side, since the coefficient $\tfrac12 s^{2}$ in $\eta$ is positive irrespective
of the sign of $s$; the transverse direction of $\nhat$ therefore agrees with $+\aph$ to
leading order on both sides of the degeneracy, deviating only at $O(s)$ through the
torsion, and by Lemma~\ref{lem:bare} the jump of $\nhat$ from $-\vhat$ to $+\vhat$ leaves
no concentrated contribution to the solid angle. Second, the tangent directions of the
parabola at $\pm\vhat$ single out, among the infinitely many geodesics joining the
antipodes, the one great circle actually tangent to the curve---the osculating-plane great
circle; this tangency is a second-order statement, unaffected by the torsion. The identity
of the ``geodesic that realizes the enclosed solid angle'', anticipated in
Remark~\ref{rem:notmean}, is thus supplied by the curvature at the degeneracy.

\section{Dependence on the regularization direction}
\label{sec:main}
The open path $C$ terminates at the antipodal points $\pm\vhat$, where the geodesic that closes it is not unique. One way to resolve this is by displacing the degeneracy off the
origin: a small shift $\epsilon\uhat$ makes $\dvec_{\epsilon}$ nonvanishing, so $\nhat_{\epsilon}$
is a genuine closed loop with an unambiguous solid angle, and letting $\epsilon\to0$ recovers a
definite closing whose selection depends on the direction $\uhat$.

\subsection{The regularized field}
\label{sec:reg}

Let us introduce the regularization $\dvec_{\epsilon}\equiv\dvec+\epsilon\uhat$ with $\epsilon>0$
and $\uhat\in S^{2}$ having a nonzero transverse component, i.e., $\uhat\nparallel\vhat$. Then $\dvec_{\epsilon}(t_{0})=\epsilon\uhat\neq\mathbf0$ avoids the
origin, and $\nhat_{\epsilon}$ is a smooth closed curve. Decomposing as in equation~(\ref{eq:split}),
\begin{equation}
\begin{aligned}
\xi_{\epsilon}(s)&=|\vv|s+\tfrac12 a_{\parallel}s^{2}+\epsilon(\uhat\cdot\vhat)+O(s^{3}),\\
\eta_{\epsilon}(s)&=\tfrac12|\aperp|s^{2}+\epsilon(\uhat_{\perp}\cdot\aph)+O(s^{3}),\\
\zeta_{\epsilon}(s)&=\epsilon(\uhat_{\perp}\cdot\Bh)+O(s^{3}),
\end{aligned}
\label{eq:regsplit}
\end{equation}
where $\uhat_{\perp}\equiv\uhat-(\uhat\cdot\vhat)\vhat$.
The transverse plane is spanned by $(\aph,\Bh)$ with coordinates $(\eta_{\epsilon},\zeta_{\epsilon})$. Writing
\begin{equation}
u\equiv|\uhat_{\perp}|,\qquad
\alpha\equiv\atantwo\big(\uhat_{\perp}\cdot\Bh,\;\uhat_{\perp}\cdot\aph\big),
\label{eq:alpha}
\end{equation}
so that $\uhat_{\perp}=u(\cos\alpha\,\aph+\sin\alpha\,\Bh)$, the transverse components become
\begin{equation}
\eta_{\epsilon}(s)=\tfrac12|\aperp|s^{2}+\epsilon u\cos\alpha,\qquad
\zeta_{\epsilon}(s)=\epsilon u\sin\alpha .
\label{eq:etazeta}
\end{equation}
Geometrically, $\alpha$ is the angle specifying the direction of the transverse
regularization $\uhat_{\perp}$ within the $(\aph,\Bh)$ plane, measured from the
principal normal $\aph$; as we now show, it also parametrizes the closing geodesic:
$\alpha=0$ corresponds to the osculating-plane great circle (the canonical geodesic),
and $\alpha\neq0$ to a great circle rotated by $\alpha$ about the $\vhat$ axis (see figure~\ref{fig:frenet}).

The key structural fact is that the regularization contributes an
\emph{$s$-independent} transverse shift $\epsilon\uhat_{\perp}$. To leading order the transverse trajectory is then the horizontal line
$\zeta\simeq\epsilon u\sin\alpha$ traced as $s$ runs through the crossing: the point
$(\eta_{\epsilon},\zeta_{\epsilon})$ comes in from large $\eta$, reaches its closest approach to
the origin at $s=0$, where $\eta_{\epsilon}=\epsilon u\cos\alpha$, and recedes back to large
$\eta$, staying at fixed height $\zeta\simeq\epsilon u\sin\alpha$ throughout. The azimuth is accordingly

\begin{equation}
\phi_{\epsilon}(s)=\atantwo\big(\zeta_{\epsilon},\eta_{\epsilon}\big)
=\atantwo\Big(\epsilon u\sin\alpha,\;\tfrac12|\aperp|s^{2}+\epsilon u\cos\alpha\Big).
\label{eq:phieps}
\end{equation}

According to equation~\eqref{eq:oneform}, an azimuthal change $\dd\phi$ contributes $(1-\cos\theta)\dd\phi$ to the solid angle. Its contribution therefore depends strongly on the polar angle: the weight vanishes at the north pole $(\theta=0)$ and reaches its maximum value of $2$ at the south pole $(\theta=\pi)$. To determine the net contribution of the azimuthal turn in equation~\eqref{eq:phieps}, we compare the characteristic $s$-scales over which $\theta_\epsilon$ and $\phi_\epsilon$ vary.
The transverse vector entering equation~\eqref{eq:phieps} can be written as
\begin{equation}
(\eta_\epsilon,\zeta_\epsilon)
=
\tfrac12|\aperp|s^{2}(1,0)
+
\epsilon u(\cos\alpha,\sin\alpha),
\end{equation}
which shows that the quadratic contribution becomes comparable to the magnitude of the
regulariza\-tion-induced transverse displacement at
\begin{equation}
|s|\sim s_\phi
\equiv
\sqrt{\frac{2\epsilon u}{|\aperp|}}.
\end{equation}

The polar angle is governed by a different scale, which is best seen through its deviation
from the nearer pole, $\delta\equiv\min(\theta_{\epsilon},\pi-\theta_{\epsilon})$. From $\tan\theta_\epsilon=\sqrt{\eta_{\epsilon}^{2}+\zeta_{\epsilon}^{2}}\,/\,|\xi_{\epsilon}|$ and $\tan\delta\simeq\delta$ for small $\delta$, $|\xi_{\epsilon}|\simeq|\vv||s|$ to leading
order, and the transverse part dominated by the regularization for $|s|\ll s_{\phi}$,
$\sqrt{\eta_{\epsilon}^{2}+\zeta_{\epsilon}^{2}}\simeq\epsilon u$, the deviation obeys
\begin{equation}
\delta(s)\simeq\frac{\sqrt{\eta_{\epsilon}^{2}+\zeta_{\epsilon}^{2}}}{|\xi_{\epsilon}|}
\simeq\frac{\epsilon u}{|\vv|\,|s|}\,.
\label{eq:delta}
\end{equation}
Equation~(\ref{eq:delta}) yields the polar scale: the deviation is small precisely while
$|s|\gg s_{\theta}\equiv\epsilon u/|\vv|$, so that the polar angle stays pinned at its
pole, $\theta_{\epsilon}\approx0$ for $s>0$ and $\theta_{\epsilon}\approx\pi$ for $s<0$,
whereas for $|s|\lesssim s_{\theta}$ the deviation is $O(1)$ and $\theta_{\epsilon}$ sweeps
continuously from $\pi$ to $0$ as $s$ crosses the origin. The two scales are therefore
well separated, $s_{\theta}\sim\epsilon\ll\sqrt{\epsilon}\sim s_{\phi}$, and this
separation splits the crossing into three regions.

The three regions are distinguished by which of $\theta$ and $\phi$ is in motion. In
region~$\mathrm{I}$ ($s_{\phi}\ll|s|\ll1$) the polar angle has already reached its pole
while the azimuth still sits at $0$; in region~$\mathrm{II}$
($s_{\theta}\ll|s|\ll s_{\phi}$) the azimuth turns from $0$ to $\alpha$ while
$\theta_{\epsilon}$ stays pinned at its pole; and in region~$\mathrm{III}$
($|s|\lesssim s_{\theta}$) $\theta_{\epsilon}$ sweeps from $\pi$ to $0$, the azimuth being
already fixed at $\alpha$.

The essential point is that the azimuthal turn of region~$\mathrm{II}$ takes place where
the regularized curve passes the two poles $\pm\vhat$, with $\theta$ settled at $0$ or
$\pi$. Evaluating equation~(\ref{eq:delta}) at the azimuthal scale $|s|\sim s_{\phi}$
gives $\delta\sim\epsilon u/(|\vv|\sqrt{\epsilon})=O(\sqrt{\epsilon})\to0$, so expanding
the monopole weight to second order in $\delta$, with $\theta_{\epsilon}=\delta$ on the
$s>0$ branch and $\theta_{\epsilon}=\pi-\delta$ on the $s<0$ branch,
\begin{equation}
1-\cos\theta_{\epsilon}=
\begin{cases}
1-\cos\delta=\tfrac12\delta^{2}+\cdots=O(\epsilon), & s>0,\\[2pt]
1+\cos\delta=2-\tfrac12\delta^{2}+\cdots=2-O(\epsilon), & s<0.
\end{cases}
\label{eq:weights}
\end{equation}
The same turn $\alpha$ is thus counted with weight $O(\epsilon)$ near one pole and weight
$2-O(\epsilon)$ near the other; this asymmetry is what produces the net $2\alpha$.

\subsection{Main theorem}
\label{sec:theorem}

\begin{theorem}[deviation of the solid angle]
\label{thm:main}
Let $\dvec(t)$ cross the origin transversally at $t_{0}$ with $\aperp\neq\mathbf0$. Let $C$
be the trajectory of the bare field $\nhat(t)$ (an open curve with antipodal endpoints
$\pm\vhat$), $\Omega[C]$ the value in Definition~\ref{def:omega}, and $\Omega(\epsilon\uhat)$
the solid angle of the regularized closed curve $\nhat_{\epsilon}$. If $\uhat\nparallel\vhat$
then
\begin{equation}
\Omega(+\epsilon\uhat)=\Omega[C]+2\alpha+O(\epsilon),
\label{eq:master}
\end{equation}
with $\alpha$ the azimuth~(\ref{eq:alpha}); equivalently
\begin{equation}
\gamma_{g}(+\epsilon\uhat)=\gamma_{g}[C]-\alpha+O(\epsilon),
\qquad \gamma_{g}[C]=-\tfrac12\Omega[C].
\label{eq:masterg}
\end{equation}
\end{theorem}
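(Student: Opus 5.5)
We will compute $\Omega(\epsilon\uhat)$ with the monopole connection of equation~(\ref{eq:oneform}), based at $\nz=+\vhat$. Since $\nhat_\epsilon$ is a closed curve, equation~(\ref{eq:stokes}) gives $\Omega(\epsilon\uhat)=\oint(1-\cos\theta_\epsilon)\,\dd\phi_\epsilon$ (modulo $4\pi$). Two facts justify this. First, $\nhat_\epsilon$ avoids the Dirac string $-\vhat$, because $\zeta_\epsilon\neq0$ or $\eta_\epsilon>0$ near the crossing; this is where $\uhat\nparallel\vhat$ is used. Second, $\nhat_\epsilon$ passes through a neighbourhood of $+\vhat$, where $\phi_\epsilon$ is ill-defined but the weight $1-\cos\theta_\epsilon$ vanishes, so no singular contribution arises there. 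We fix $w$ with $\sqrt{\epsilon}\ll w\ll1$ and split the period into an outer part $|s|\ge w$ and an inner window $|s|<w$.

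On the outer part, $\dvec_\epsilon=\dvec+\epsilon\uhat$ with $|\dvec|$ bounded below by $c\,w$. Hence $\nhat_\epsilon=\nhat+O(\epsilon/w)$ in $C^1$, and the integrand changes by $O(\epsilon/w)$ pointwise. One must check that no pole is approached away from $t_0$; this holds because $\nhat(t)\to\pm\vhat$ only as $t\to t_0$. The outer integral therefore equals $\int_{C,|s|\ge w}\mathcal A+O(\epsilon/w)$. By Lemma~\ref{lem:bare}, this differs from $\Omega[C]$ by $O(w)$.

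The main work is in the inner window, where we use the three-region analysis above.

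\begin{itemize}
\item \textbf{Region II, branch $s<0$.} Here $\theta_\epsilon=\pi-\delta$ and the weight is $2-O(\epsilon)$. As $s$ runs from $-w$ to $-s_\theta$, equation~(\ref{eq:phieps}) sends $\phi_\epsilon$ from $\approx0$ to $\approx\alpha$, with corrections of order $\epsilon/w^2$ at the outer end. This branch contributes $2\alpha$, plus the torsion correction $2[\phi_{\rm bare}(0)-\phi_{\rm bare}(-w)]=O(w)$, which matches the corresponding bare term.
\item \textbf{Region II, branch $s>0$.} The weight is $\tfrac12\delta^2=O(\epsilon^2/s^2)$. Integrated against $\dd\phi_\epsilon$, which is concentrated at scale $s_\phi$, this gives $O(\epsilon)$.
\item \textbf{Region III.} Here $\phi_\epsilon\approx\alpha$ is essentially frozen. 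Since $\zeta_\epsilon,\eta_\epsilon$ vary by only $O(s_\theta^2)=O(\epsilon^2)$, we have $\dd\phi_\epsilon=O(\epsilon)$ over an $s$-range of length $O(\epsilon)$. So this region contributes $o(1)$ even though the weight is $O(1)$.
\end{itemize}

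The main obstacle is making this rigorous. For $\alpha$ near $\pi$, the point $(\eta_\epsilon,\zeta_\epsilon)$ passes close to the origin of the transverse plane. The atan2 branch must then be tracked continuously: the winding from $0$ to $\alpha$ should be taken with $\alpha\in(-\pi,\pi]$, with $\alpha=\pi$ degenerating to a genuine hit of the string, which is excluded. I would handle this by computing the exact integral $\int\dd\phi_\epsilon=\phi_\epsilon(0)-\phi_\epsilon(-w)=\alpha-O(\epsilon/w^2)$ directly from the closed form~(\ref{eq:phieps}). The errors from replacing the weight by $2$ would be bounded using $|2-(1-\cos\theta_\epsilon)|\le\delta^2$ together with $\int|\dd\phi_\epsilon|\le C$, the latter holding because $\phi_\epsilon$ is monotone on each branch to leading order. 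The $O(s^3)$ terms in~(\ref{eq:regsplit}) only perturb $\phi_\epsilon$ by the bare torsion drift, which has already been accounted for.

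Collecting the pieces gives
\[
\Omega(\epsilon\uhat)=\Omega[C]+2\alpha+O(w)+O(\epsilon/w^2)+O(\epsilon).
\]
In the limit, $w\to0$ and $\epsilon/w^2\to0$, so $\Omega(\epsilon\uhat)\to\Omega[C]+2\alpha$. A careful choice of $w$ that balances the errors, together with noting that the torsion terms cancel exactly between the bare and regularized curves, should improve the remainder to the claimed $O(\epsilon)$. Equation~(\ref{eq:masterg}) then follows from equation~(\ref{eq:gg}).
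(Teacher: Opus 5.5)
\textbf{Gap: the final step does not give the stated $O(\epsilon)$ rate.} Your strategy is the paper's. You split at a window and show the far part barely moves. Inside the window you use $s_\theta\ll s_\phi$ to see that the azimuthal turn $0\to\alpha$ imposed by $\epsilon\uhat_\perp$ is weighted by $\approx 2$ next to the Dirac string and by $\approx 0$ next to the base point, with the sharp value $\phi_\epsilon(0)=\alpha$. Done carefully, this proves $\Omega(\epsilon\uhat)\to\Omega[C]+2\alpha$.

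The problem is your last sentence. Your bookkeeping ends at $O(w)+O(\epsilon/w^{2})+O(\epsilon)$, and no choice of $w$ makes this $O(\epsilon)$: the best balance, $w\sim\epsilon^{1/3}$, gives $O(\epsilon^{1/3})$. Both bad terms come from your own setup:
\begin{itemize}
\item The $O(w)$ arises because you compare the truncated bare integral with the full $\Omega[C]$. That is, you bound the bare window via Lemma~\ref{lem:bare} instead of subtracting it.
\item The $O(\epsilon/w^{2})$ arises because you let $w$ shrink toward $s_\phi$.
\end{itemize}
The paper avoids both by fixing the window $\delta_0$ independently of $\epsilon$ and subtracting the bare window integral from the regularized one. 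Then the torsion term $2\phi_{\rm bare}(-\delta_0)$ cancels exactly. You noticed this cancellation in your first bullet but still carried $O(w)$ into the final tally. With $\delta_0$ fixed, the far-part and endpoint errors are $O(\epsilon)$.

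\textbf{The $O(\epsilon)$ rate is not attainable at all.} Even after that repair, the true rate is worse. Two of your estimates each lose a logarithm: the claim that the $s>0$ branch gives $O(\epsilon)$, and the bound $\sup\delta^{2}\int|\dd\phi_\epsilon|$ for replacing the weight by $2$. On either branch the relevant weight is $\tfrac12\delta^{2}\simeq\rho^{2}/(2|\vv|^{2}s^{2})$, with $\rho^{2}=\eta_\epsilon^{2}+\zeta_\epsilon^{2}$. For $s<0$ this is the deficit $1+\cos\theta_\epsilon$; for $s>0$ it is $1-\cos\theta_\epsilon$. Equation~(\ref{eq:phieps}) gives $|\dd\phi_\epsilon/\dd s|=\epsilon u|\aperp||\sin\alpha||s|/\rho^{2}$.

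The factors $\rho^{2}$ cancel. The product is $\epsilon u|\aperp||\sin\alpha|/(2|\vv|^{2}|s|)$ over the whole range $s_\theta\lesssim|s|\lesssim\delta_0$, so it is spread logarithmically across regions I and II rather than concentrated at $s_\phi$. It integrates to $\tfrac12\kappa\,\epsilon\,|\uhat\cdot\Bh|\ln(1/\epsilon)$ per branch, where $\kappa=|\aperp|/|\vv|^{2}$ is the curvature of $\dvec$ at the crossing. The two branches add, giving $\Omega(\epsilon\uhat)=\Omega[C]+2\alpha-\kappa\,\epsilon\,(\uhat\cdot\Bh)\ln(1/\epsilon)+O(\epsilon)$.

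The paper's proof has the same blind spot. Its ``weight $2-O(\epsilon)$ throughout region II'' holds only at $|s|\sim s_\phi$. It fails as $|s|\to s_\theta$, and throughout region I, where $\delta\simeq|\aperp||s|/(2|\vv|)$.

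Field C makes this concrete. The regularized curve is a unit circle viewed from height $\epsilon$ above its rim, whose solid angle is $\pi-\epsilon\ln(8/\epsilon)+o(\epsilon)$. This reproduces the appendix values $|\gamma_g|=0.383\pi$ at $\epsilon=0.2$ and $0.489\pi$ at $\epsilon=0.01$.

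So your argument, repaired as above, can deliver $\Omega(\epsilon\uhat)=\Omega[C]+2\alpha+O(\epsilon\ln(1/\epsilon))$. That rate is sharp whenever $\uhat\cdot\Bh\neq0$, so you cannot improve it to $O(\epsilon)$ by any choice of $w$.
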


\begin{proof}
By Lemma~\ref{lem:bare}, the neighbourhood of the degeneracy point
contributes to $\Omega[C]$ an amount that vanishes with the window but is not zero for
any fixed window. We therefore compare the regularized and the bare solid angles window
by window rather than discarding this contribution. Fix $\delta_{0}$ with
$s_{\phi}\ll\delta_{0}\ll1$ and split both solid angles at $|s|=\delta_{0}$,
\begin{equation}
\Omega(\epsilon\uhat)=F_{\epsilon}+N_{\epsilon},\qquad \Omega[C]=F_{0}+N_{0},
\label{eq:FNsplit}
\end{equation}
where $F$ and $N$ denote the integrals of $(1-\cos\theta)\,\dd\phi$ over the far region
$|s|>\delta_{0}$ and the near window $|s|<\delta_{0}$, for the regularized and the bare
curve, respectively. We estimate the two differences $F_{\epsilon}-F_{0}$ and
$N_{\epsilon}-N_{0}$ in turn.

\emph{Far part.} On $|s|>\delta_{0}$ the field is bounded away from the origin,
$|\dvec|\gtrsim|\vv|\delta_{0}$, so $\nhat_{\epsilon}=\nhat+O(\epsilon)$ uniformly there;
the weight and the azimuth of the regularized curve then reduce to their bare
counterparts with $O(\epsilon)$ error, so that $F_{\epsilon}-F_{0}=O(\epsilon)$.

\emph{Near part.} The near window $|s|<\delta_{0}$ spans regions~$\mathrm{I}$,
$\mathrm{II}$, and~$\mathrm{III}$, which locate the difference $N_{\epsilon}-N_{0}$ as
follows. In region~$\mathrm{I}$ the bare curve term dominates the regularization in
equation~(\ref{eq:regsplit}), so in the integrand of $N_{\epsilon}-N_{0}$ the common bare
part cancels identically while the regularization perturbs the azimuthal measure
$\dd\phi$ and the weight only at $O(\epsilon)$; the region therefore contributes
$O(\epsilon)$ to the difference. In region~$\mathrm{III}$ the transverse vector of the
regularized curve reduces to $\epsilon\uhat_{\perp}$, the curve term
$\tfrac12|\aperp|s^{2}$ being only $O(\epsilon^{2})$ there, so its azimuth is constant,
$\phi_{\epsilon}=\alpha+O(\epsilon)$, and there is no azimuthal motion to weight despite
the $O(1)$ sweep of $\theta_{\epsilon}$; the bare integrand there is itself
$O(\epsilon)$, by the window estimate of Lemma~\ref{lem:bare} with
$w=s_{\theta}\sim\epsilon$.

The difference is therefore generated in region~$\mathrm{II}$, and it can be read off by
comparing the two factors of the integrand, $(1-\cos\theta)$ and $\dd\phi$. The azimuthal
factor supplies the same net turn on both branches: relative to the common bare
measure---which cancels in $N_{\epsilon}-N_{0}$---the regularization produces one turn of
size $\alpha$, executed on the $s<0$ side and undone on the $s>0$ side. Near the north
pole $\theta\approx0$, the returning turn of size $|\alpha|$ is counted with weight
$O(\epsilon)$ by equation~(\ref{eq:weights}), while the bare contribution over
region~$\mathrm{II}$ is bounded by the window estimate of Lemma~\ref{lem:bare} with
$w=s_{\phi}$---and in fact vanishes faster on this branch, the weight being $O(s^{2})$
there, integrating to $O(s_{\phi}^{3})=O(\epsilon^{3/2})$. Hence
\begin{equation*}
N^{s>0}_{\epsilon}-N^{s>0}_{0}=O(\epsilon),
\end{equation*}
the returning turn leaving no trace at this pole. Near the south pole
$\theta\approx\pi$, however, the weight is pinned at $2-O(\epsilon)$ throughout
region~$\mathrm{II}$; since regions~$\mathrm{I}$ and~$\mathrm{III}$ contribute only
$O(\epsilon)$, it factors out of the integral and each curve contributes twice its net
azimuthal change across the window,
\begin{equation*}
N^{s<0}=\int_{-\delta_{0}}^{0}(1-\cos\theta)\,\dd\phi
=2\big[\phi(0)-\phi(-\delta_{0})\big]+O(\epsilon).
\end{equation*}
The endpoint values are sharp: $\phi_{\rm bare}(0)=0$ by Lemma~\ref{lem:bare},
$\phi_{\epsilon}(0)=\alpha$ because the transverse vector at $s=0$ is exactly
$\epsilon\uhat_{\perp}$, and
$\phi_{\epsilon}(-\delta_{0})=\phi_{\rm bare}(-\delta_{0})+O(\epsilon)$ because
$\delta_{0}$ lies in region~$\mathrm{I}$. Subtracting the bare from the regularized
contribution,
\begin{equation*}
N^{s<0}_{\epsilon}-N^{s<0}_{0}
=2\big[\phi_{\epsilon}(0)-\phi_{\epsilon}(-\delta_{0})\big]
-2\big[\phi_{\rm bare}(0)-\phi_{\rm bare}(-\delta_{0})\big]+O(\epsilon)
=2\alpha+O(\epsilon),
\end{equation*}
the torsion term $\phi_{\rm bare}(-\delta_{0})$, common to both curves, dropping out.

\emph{Collection.} Collecting the far part and the three regions of the near window, the
total difference is
\begin{equation}
\Omega(\epsilon\uhat)-\Omega[C]
=\underbrace{O(\epsilon)}_{\Delta F}
+\underbrace{O(\epsilon)}_{\Delta N_{I}}
+\underbrace{2\alpha+O(\epsilon)}_{\Delta N_{II}}
+\underbrace{O(\epsilon)}_{\Delta N_{III}}
=2\alpha+O(\epsilon),
\label{eq:collect}
\end{equation}
where $\Delta F\equiv F_{\epsilon}-F_{0}$ and $\Delta N_{X}$ denotes the contribution of region~$X$ to $N_{\epsilon}-N_{0}$. This is equation~(\ref{eq:master});
equation~(\ref{eq:masterg}) follows on multiplying by the factor $-\tfrac12$ of
equation~(\ref{eq:gg}).
\end{proof}

\begin{remark}[mechanism]
\label{rem:mechanism}
Around the closed loop the azimuth returns to its initial value, so its net change
vanishes; the deviation $2\alpha$ is entirely a matter of \emph{where} the change takes
place. The turn imposed by the regularization is executed next to the Dirac string, where
the monopole weight is $2$, and undone at the opposite pole, where it vanishes, so the
weighted change is $2\alpha$---the finite analogue of the $O(w)$ boundary term in the
proof of Lemma~\ref{lem:bare}. For $\uhat_{\perp}\parallel\aph$, i.e.\ $\alpha=0$, the
regularization is aligned with the bare transverse direction and no turn is imposed at
all: the osculating-plane regularization is the unique neutral choice.
\end{remark}

\subsection{Corollaries}
\label{sec:corollaries}

\begin{corollary}[Berry's $\pi$ invariant]
\label{cor:pi}
For any $\uhat\nparallel\vhat$,
$\gamma_{g}(+\epsilon\uhat)-\gamma_{g}(-\epsilon\uhat)\to\pi\pmod{2\pi}$.
\end{corollary}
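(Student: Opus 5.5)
Apply Theorem~\ref{thm:main} twice, once to $\uhat$ and once to $-\uhat$, and subtract. Both regularizations are covered by the theorem: $-\uhat\nparallel\vhat$ whenever $\uhat\nparallel\vhat$, so its hypotheses hold for both. Writing $\Omega(-\epsilon\uhat)$ for the regularization $\dvec+\epsilon(-\uhat)$, equation~(\ref{eq:masterg}) gives
\begin{equation*}
\gamma_{g}(+\epsilon\uhat)-\gamma_{g}(-\epsilon\uhat)=-\alpha(\uhat)+\alpha(-\uhat)+O(\epsilon).
\end{equation*}
The bare term $\gamma_{g}[C]$ is common to both and cancels. This is the point of Lemma~\ref{lem:antipodal} and Definition~\ref{def:omega}: $\Omega[C]$ is intrinsic to $C$ with base point $\nz=\vhat$ and knows nothing about the regularization.

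The remaining step is to compute $\alpha(-\uhat)$ from definition~(\ref{eq:alpha}). The transverse part flips sign, $(-\uhat)_{\perp}=-\uhat_{\perp}$, so $u$ is unchanged. Both arguments of $\atantwo$ flip sign, hence $\alpha(-\uhat)=\alpha(\uhat)+\pi\pmod{2\pi}$. Substituting,
\begin{equation*}
\gamma_{g}(+\epsilon\uhat)-\gamma_{g}(-\epsilon\uhat)=\pi+O(\epsilon)\pmod{2\pi},
\end{equation*}
and letting $\epsilon\to0$ gives the claim. The result is independent of $\alpha$, of the torsion coefficient $b_{3}$, and of the rest of the curve, which is why it is an invariant.

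The only real obstacle is the mod-$2\pi$ bookkeeping. $\alpha$ is an $\atantwo$ value in $(-\pi,\pi]$, so the raw difference $\alpha(-\uhat)-\alpha(\uhat)$ equals $\pm\pi$ depending on the branch. Likewise, the solid angles of the regularized closed loops are only defined modulo $4\pi$, through the Dirac-string caveat of section~\ref{sec:gauge}, so the phases $\gamma_{g}$ are defined modulo $2\pi$. I would handle this by noting that the proof of Theorem~\ref{thm:main} used only the net azimuthal turn $\phi_{\epsilon}(0)-\phi_{\rm bare}(0)$. That turn is fixed only up to $2\pi$ by the endpoint value $\phi_{\epsilon}(0)=\alpha$, so equation~(\ref{eq:master}) itself holds modulo $4\pi$. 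After the factor $-\tfrac12$, the two signs $\pm\pi$ agree modulo $2\pi$, which is exactly the stated conclusion.
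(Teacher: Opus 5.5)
Your proposal is correct and follows the paper's proof: apply Theorem~\ref{thm:main} to $\pm\uhat$, observe that $\uhat_{\perp}\to-\uhat_{\perp}$ shifts $\alpha$ by $\pi$, and let the common $\gamma_{g}[C]$ and $\alpha$ cancel. Your mod-$2\pi$ bookkeeping paragraph is a harmless, slightly more careful addition to the paper's one-line substitution $\alpha\to\alpha+\pi$.
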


\begin{proof}
Reversing the regularization sends $\uhat_{\perp}\to-\uhat_{\perp}$, hence
$\alpha\to\alpha+\pi$ in~(\ref{eq:alpha}). By Theorem~\ref{thm:main},
$\gamma_{g}(+\epsilon\uhat)-\gamma_{g}(-\epsilon\uhat)
=(\gamma_{g}[C]-\alpha)-(\gamma_{g}[C]-\alpha-\pi)=\pi$; both $\alpha$ and $\gamma_{g}[C]$
cancel, so the value is independent of $\uhat$ and $\epsilon$.
\end{proof}

\begin{corollary}[reflection symmetry]
\label{cor:reflection}
If a plane $\Pi$ through the origin contains $\dvec(t)$ for all $t$ (reflection symmetry) and
the regularization is taken along the mirror normal $\uhat\perp\Pi$, then $\alpha=\pm\pi/2$,
$\gamma_{g}[C]=0$, and $\gamma_{g}(\pm\epsilon\uhat)=\mp\pi/2$.
\end{corollary}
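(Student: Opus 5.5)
The plan is to verify the three claims in turn: $\alpha=\pm\pi/2$, $\gamma_{g}[C]=0$, and $\gamma_{g}(\pm\epsilon\uhat)=\mp\pi/2$. The first and third follow from the Frenet geometry together with Theorem~\ref{thm:main}. The second, the vanishing of the intrinsic solid angle, is where the plane must be used globally.

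First, the angle. Since $\dvec(t)\in\Pi$ for all $t$, both $\vv=\dot\dvec(t_{0})$ and $\av=\ddot\dvec(t_{0})$ lie in $\Pi$. Hence the osculating plane is $\Pi_{\rm osc}=\operatorname{span}\{\vv,\av\}=\Pi$, using $\aperp\neq\mathbf 0$ as assumed in the theorem. The binormal $\Bh=\vhat\times\aph$ is therefore a unit normal of $\Pi$, so $\uhat=\pm\Bh$. This gives $\uhat\cdot\vhat=0$ and $\uhat_{\perp}=\uhat$, so $u=1$. Equation~(\ref{eq:alpha}) then gives $\alpha=\atantwo(\pm1,0)=\pm\pi/2$. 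Reversing the sign of $\uhat$ flips $\alpha$, consistent with Corollary~\ref{cor:pi}.

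Second, the intrinsic solid angle. Take the base point $\nz=\vhat\in\Pi$. For every $t\neq t_{0}$, the direction $\nhat(t)=\dvec/|\dvec|$ lies on the great circle $\Pi\cap S^{2}$, and so does its tangent $\dd\nhat$. Thus $\nhat\times\dd\nhat$ is normal to $\Pi$, while $\nz\in\Pi$. The integrand $\nz\cdot(\nhat\times\dd\nhat)/(1+\nz\cdot\nhat)$ of Definition~\ref{def:omega} therefore vanishes identically wherever it is defined. This is the same argument as in Lemma~\ref{lem:geodesic}, since $\Av\propto\nz\times\nhat$ is normal to $\Pi$ and $\dd\nhat$ lies in it.

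The one delicate point is the approach to the Dirac string at $-\vhat$, where the denominator vanishes. Here the integrand is $0/0$ only in the limit, and the integrand is identically zero at every point with $\nhat\neq-\vhat$. Lemma~\ref{lem:bare} handles this: with torsion $b_{3}=0$ (the curve is planar), $\phi_{\rm bare}\equiv0$, so the window contribution is zero. In coordinates, $\mathcal A=(1-\cos\theta)\,\dd\phi$ with $\phi$ constant, equal to $0$ or $\pi$ on $\Pi$. There is one caveat. If the curve passes through $-\vhat$ at some $t\neq t_{0}$, $\phi$ can jump by $\pi$ there at weight $2$, which gives a contribution in $2\pi\mathbb Z$. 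That leaves $\gamma_{g}[C]$ unchanged modulo $2\pi$, and I would state the claim modulo $2\pi$ in that case. Hence $\Omega[C]=0$ and $\gamma_{g}[C]=-\tfrac12\Omega[C]=0$.

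Finally, insert these values into equation~(\ref{eq:masterg}) of Theorem~\ref{thm:main}, taking $\uhat$ to be the mirror normal with sign chosen so that $\uhat=+\Bh$ for the $+$ case. This gives $\gamma_{g}(\pm\epsilon\uhat)=0-(\pm\pi/2)+O(\epsilon)$, that is, $\mp\pi/2$ in the limit $\epsilon\to0$.

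The main obstacle is sign bookkeeping. The statement's $\pm$ labelling is tied to orienting the normal of $\Pi$ as $\Bh$, so I would state that convention explicitly: the opposite orientation simply exchanges the two signs.
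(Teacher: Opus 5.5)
Your main line is the paper's proof. From $\vv,\av\in\Pi$ you get $\Pi_{\rm osc}=\Pi$ and $\Bh=\pm\uhat$, hence $u=1$ and $\alpha=\pm\pi/2$. Confinement of $\nhat$ to $\Pi\cap S^{2}$ kills the integrand as in Lemma~\ref{lem:geodesic}, so $\Omega[C]=0$, and Theorem~\ref{thm:main} then gives $\mp\pi/2$. Tying the sign convention to $\uhat=+\Bh$ is a genuine improvement on the paper's loose $\pm$. The paper's own Field~C has $\uhat=\hat{\mathbf y}=-\Bh$, and its reported numerics accordingly give $\gamma_{g}(+\epsilon\uhat)\to+\pi/2$.

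The caveat you add, however, is wrong. A jump $\Delta\phi=\pm\pi$ at weight $2$ changes $\Omega$ by $\pm2\pi$. That shifts $\gamma_{g}=-\tfrac12\Omega$ by $\mp\pi$, not by a multiple of $2\pi$, so restating the claim ``modulo $2\pi$'' rescues nothing. Nor is this an artefact of how you read the $0/0$. Take $\dvec(t)=\sin(t/3)\,(\cos t,\sin t,0)$, which has period $3\pi$ and crosses the origin at $t_{0}=0$ with $\vhat=\hat{\mathbf x}$, $\aph=\hat{\mathbf y}$, $\Bh=\hat{\mathbf z}$. Then $\nhat=(\cos t,\sin t,0)$ for $0<t<3\pi$, so $C$ passes through $-\vhat$ at $t=\pi$. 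With $\uhat=\hat{\mathbf z}$ one has $\nhat_{\epsilon}\cdot\hat{\mathbf z}>0$ throughout, so $\Omega(+\epsilon\Bh)$ may be computed in the gauge with base point $\hat{\mathbf z}$. There the weight tends to $1$ away from the crossing and the azimuth is simply $t$. This gives $\Omega(+\epsilon\Bh)\to3\pi$, hence $\gamma_{g}(+\epsilon\Bh)\to-3\pi/2\equiv+\pi/2$, the opposite of the claimed $-\pi/2$.

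What fails is the far-part step in the proof of Theorem~\ref{thm:main}. That step tacitly needs the bare azimuth to be regular wherever $|\dvec|$ is bounded away from zero, i.e.\ that $C$ meets the Dirac string $-\vhat$ only at its endpoint. The paper's proof passes over this case silently, so you were right to worry about it. The correct repair is to add that hypothesis, not to weaken the conclusion. Under it, the in-plane sweep is exactly $\pi$ and your argument is complete.
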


\begin{proof}
Here $\vv,\av\in\Pi$, so $\aperp\in\Pi$ and $\Pi_{\rm osc}=\Pi$; since $\uhat\perp\Pi$ one has
$\uhat\cdot\vhat=0$, $\uhat_{\perp}=\uhat$, and $\Bh=\vhat\times\aph$ is the normal of $\Pi$,
so $\Bh=\pm\uhat$. Then $\uhat_{\perp}\cdot\aph=0$ and $\uhat_{\perp}\cdot\Bh=\pm1$, giving
$\alpha=\atantwo(\pm1,0)=\pm\pi/2$. As $\nhat$ is confined to the great circle
$\Pi\cap S^{2}$, Lemma~\ref{lem:geodesic} gives $\Omega[C]=0$, i.e.\ $\gamma_{g}[C]=0$;
Theorem~\ref{thm:main} then yields $\gamma_{g}(\pm\epsilon\uhat)=\mp\pi/2$.
\end{proof}

\begin{corollary}[non-universality of the average]
\label{cor:average}
$\tfrac12\big[\gamma_{g}(+\epsilon\uhat)+\gamma_{g}(-\epsilon\uhat)\big]=\gamma_{g}[C]-\alpha-\tfrac{\pi}{2}$,
so the average of the two regularized values depends on the regularization direction through
$\alpha$ and is not universal; under a reflection symmetry, where $\alpha=\pm\pi/2$ and
$\gamma_{g}[C]=0$, it equals $0$ or $-\pi$, agreeing with $\gamma_{g}[C]$ modulo $\pi$.
\end{corollary}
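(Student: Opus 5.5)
The corollary follows directly from Theorem~\ref{thm:main} together with the $\alpha\to\alpha+\pi$ rule already used in the proof of Corollary~\ref{cor:pi}. I would write both regularized phases from equation~(\ref{eq:masterg}). Reversing $\uhat\to-\uhat$ sends $\uhat_{\perp}\to-\uhat_{\perp}$ and hence $\alpha\to\alpha+\pi$ in equation~(\ref{eq:alpha}). This gives $\gamma_{g}(+\epsilon\uhat)=\gamma_{g}[C]-\alpha+O(\epsilon)$ and $\gamma_{g}(-\epsilon\uhat)=\gamma_{g}[C]-\alpha-\pi+O(\epsilon)$. Averaging yields $\gamma_{g}[C]-\alpha-\tfrac{\pi}{2}$ in the limit $\epsilon\to0$.

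The only point needing care is the branch of $\atantwo$. The shift $\alpha+\pi$ may have to be reduced into $(-\pi,\pi]$, giving $\alpha-\pi$ instead. Because the result then changes by $\pi$, the average as written holds modulo $\pi$, which is the equivalence the second clause already uses. I would state explicitly that $\alpha$ for $-\uhat$ is taken as $\alpha+\pi$, i.e.\ defined by continuity. I would also note that phases are only meaningful modulo $2\pi$ for each $\gamma_g$, so their half-sum is meaningful modulo $\pi$.

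For non-universality, I would observe that $\gamma_{g}[C]$ depends only on $C$ by Definition~\ref{def:omega} and Lemma~\ref{lem:antipodal}. Meanwhile $\alpha$ ranges over all of $(-\pi,\pi]$ as $\uhat_{\perp}$ rotates about $\vhat$. The average therefore varies with $\uhat$.

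For the reflection-symmetric case, Corollary~\ref{cor:reflection} gives $\alpha=\pm\pi/2$ and $\gamma_{g}[C]=0$. Substituting gives $-\tfrac{\pi}{2}\mp\tfrac{\pi}{2}\in\{-\pi,0\}$. Both values are congruent to $\gamma_{g}[C]=0$ modulo $\pi$.
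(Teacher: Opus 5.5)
Your proposal is correct and follows the argument the paper intends; the paper gives no separate proof. It applies Theorem~\ref{thm:main} to $\pm\uhat$ with the shift $\alpha\to\alpha+\pi$ from Corollary~\ref{cor:pi}, averages, and substitutes the values from Corollary~\ref{cor:reflection}. Your branch caveat is a real improvement: with the principal $\atantwo$ value for $-\uhat$, the reflection-symmetric average is $0$ in both cases, matching the vanishing sum reported for Field~C in the appendix, so the value $-\pi$ and the displayed formula hold only under the unreduced convention $\alpha+\pi$, i.e.\ modulo $\pi$, as you say.
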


The contrast between Corollary~\ref{cor:pi} and Corollary~\ref{cor:average} is instructive:
the difference is universal because $\alpha$ cancels, whereas the average is not because
$\alpha$ survives; universality of the average would require the two half-loop contributions
to be exactly opposite, which only a mirror symmetry guarantees. A numerical verification of
Theorem~\ref{thm:main} and its corollaries, over several representative fields and across a
range of $\epsilon$, is given in~\ref{app:numerics}.

\subsection{Application: the Uhlmann phase selects the osculating-plane regularization}
\label{sec:uhlmann}

For the two-level Hamiltonian $H(t)=-\dvec(t)\cdot\sig$ the Gibbs state is
$\rho_{G}=\tfrac12(\mathbb 1+\boldsymbol r\cdot\sig)$ with $\boldsymbol r=\tanh(\beta|\dvec|)\,\nhat$.
Near the degeneracy $\tanh(\beta|\dvec|)\simeq\beta|\dvec|$, so by the expansion of $\dvec$,
$\boldsymbol r(s)\simeq\beta\,\vv\,s$: inside the Bloch ball $\boldsymbol r$ passes through
the origin along the straight line $\vhat$, without discontinuity. The pure state (Berry)
lives on the sphere and suffers the jump, whereas the mixed-state (Uhlmann) path is closed
and unbroken from the outset. This is the origin of the selection we now establish; it also
connects to the known behaviour of the Uhlmann phase for one-dimensional fermion systems and
for the Kitaev chain, where the state curve may pass through the maximally mixed
point~\cite{Viyuela2014,Andersson2016}.

The Uhlmann connection is
$A_{U}=-\tfrac{i}{2}f(r)\,(\nhat\times\dd\nhat)\cdot\sig$ with
$f(r)=1-\sqrt{1-r^{2}}$ and $r=\tanh(\beta|\dvec|)$. Writing
$\nhat=\operatorname{sgn}(s)\,\hat{\boldsymbol w}$ with $\hat{\boldsymbol w}$ smooth across
$s=0$, the sign squares out, $\nhat\times\dd\nhat=\hat{\boldsymbol w}\times\dd\hat{\boldsymbol w}$,
so the directional part of the connection is smooth and bounded through the degeneracy, while
$r\to0$ makes $f(r)\simeq\tfrac12 r^{2}\to0$. The central segment therefore contributes
nothing, and $\gamma_{U}$ is well defined without regularization for all $T>0$ and continuous
in $T$; geometrically, the line of sight from the centre stays on the straight segment
$\pm\vhat$ and sweeps no solid angle (cf.\ Lemma~\ref{lem:bare}). As $T\to0$, $f\to1$ and
$A_{U}\to-\tfrac{i}{2}(\nhat\times\dd\nhat)\cdot\sig$ becomes the pure-state Berry generator,
the central segment still contributing nothing, so
\begin{equation}
\gamma_{U}(T\to0)=\gamma_{g}[C]=-\tfrac12\Omega[C].
\label{eq:uhlmannlimit}
\end{equation}

Combining~(\ref{eq:uhlmannlimit}) with Theorem~\ref{thm:main} gives at once
\begin{equation}
\gamma_{U}(T\to0)=\gamma_{g}(+\epsilon\aph),\qquad
\gamma_{U}(T\to0)+\pi=\gamma_{g}(-\epsilon\aph),
\label{eq:select}
\end{equation}
i.e.\ the pure-state limit of the finite-temperature mixed state automatically selects the
osculating-plane regularization ($\alpha=0$); any other direction is offset by $-\alpha$. The
prescription ``close the open path in the way the physical path approaches the
endpoint''~\cite{GarzaSotoHagen2023} is thus realized here in closed form: the selected
geodesic is the osculating-plane great circle
$\operatorname{span}\{\dot\dvec(t_{0}),\ddot\dvec(t_{0})\}\cap S^{2}$, the curvature at the
degeneracy supplying the ``third point'', and the mixed-state construction realizes that
choice on its own. That the Uhlmann and Berry phases agree in this pure-state limit is an
instance of the Uhlmann--Berry correspondence~\cite{Uhlmann1986,Sjoqvist2000,Wang2023SciPost};
the correspondence is known to fail at genuine level degeneracies (Dirac
points)~\cite{Wang2025PRB}, which is consistent with the special role the degeneracy plays
throughout the present analysis.

\section{Conclusion}
\label{sec:conclusion}

We have analysed the geometric phase and solid angle of an open curve on $S^{2}$ with
antipodal endpoints---the generic situation when the control vector $\dvec(t)$ of a
two-level system crosses the origin transversally. The monopole
connection~(\ref{eq:A}) defines the open-path solid angle intrinsically
(Definition~\ref{def:omega}); every geodesic closing the antipodal endpoints contributes
zero to the phase line integral (Lemma~\ref{lem:antipodal}); and the degeneracy itself
leaves no concentrated contribution, despite the torsion-induced tilt of the transverse
azimuth (Lemma~\ref{lem:bare}). Yet exactly one closing geodesic realizes the identity
``geometric phase $=$ enclosed solid angle''. Theorem~\ref{thm:main} quantifies the
deviation of any other choice, $\Omega(\epsilon\uhat)=\Omega[C]+2\alpha+O(\epsilon)$, and
traces it to the asymmetry of the monopole weight between the two poles: the azimuthal
turn imposed by the regularization is counted with weight $2$ next to the Dirac string and
with weight $O(\epsilon)$ at the opposite pole (Remark~\ref{rem:mechanism}). Berry's $\pi$
invariant and the reflection values $\pm\pi/2$ follow as corollaries, and the
finite-temperature Uhlmann phase selects the osculating-plane regularization $\alpha=0$ in
its pure-state limit.

The practical content is a selection rule. To assign an enclosed solid angle---and hence a
geometric phase---to an open antipodal path, the path should be closed with the great
circle in the osculating plane of $\dvec(t)$ at the degeneracy, the unique geodesic
tangent to the curve at both endpoints; any other closing overcounts by $2\alpha$, with $\alpha$ computable from the local $2$-jet of the control curve---its velocity and acceleration at the crossing. This replaces the heuristic closing rules of the open-path literature~\cite{RakhechaWagh1996,GarzaSotoHagen2023} by a closed-form prescription, and
explains why they succeed when they do.

Several directions remain open. The analysis assumes a transversal crossing with
nonvanishing transverse curvature, $\aperp\neq\mathbf0$; inflectional crossings, where the
osculating plane degenerates, and trajectories with multiple crossings, where the
geodesics selected at successive degeneracies need not coincide, call for an extension of
Theorem~\ref{thm:main}. On the physical side, the $\pi$ invariant and the reflection
values $\pm\pi/2$ are directly testable in neutron and atom
interferometry~\cite{Wagh1998,Zhou2020}, and the interplay with mixed-state phases beyond
the Uhlmann construction---the interferometric phase of~\cite{Sjoqvist2000}, and the fate
of the Uhlmann--Berry correspondence at genuine level
crossings~\cite{Wang2025PRB}---deserves a systematic study.

\section*{Acknowledgments}
This work was supported by the Institute of Information \& Communications Technology Planning \& Evaluation (IITP) grant funded by the Korea government (MSIT) (RS-2022-II221029).

\appendix
\ifdef{\ClassIOPJournal}{}{\renewcommand{\thesection}{Appendix~\Alph{section}}}
\section{Numerical verification}
\label{app:numerics}

We verify Theorem~\ref{thm:main} and its corollaries on three
representative fields: Fields~A and~B are generic, whereas Field~C
has reflection symmetry:
\begin{align}
\text{Field A}&:\ \dvec=(\sin t,\;0.8\sin2t+0.3\cos3t-0.3,\;1-\cos t),\\
\text{Field B}&:\ \dvec=(0.9\sin t,\;0.5\cos2t-0.5,\;1.2(1-\cos t)+0.4\sin2t),\\
\text{Field C}&:\ \dvec=(\cos t,\;0,\;\sin t-1).
\end{align}
Fields~A and~B cross the origin transversally at $t_{0}=0$, and Field~C at
$t_{0}=\pi/2$; Field~C is confined to the mirror plane $\Pi$ (the $xz$-plane), and its
regularization is taken along the mirror normal $\uhat=\hat{\mathbf y}$.
For Field~A, $\vv=(1,1.6,0)$ and $\av=(0,-2.7,1)$, giving $\vhat=(0.530,0.848,0)$,
$\aph=(0.695,-0.434,0.573)$ and $\Bh=(0.486,-0.304,-0.820)$.

\emph{Method.} For each field the parameter $t$ is discretised on a uniform grid over a
period. The open-path solid angle $\Omega[C]$ is computed from the line
integral~(\ref{eq:defomega}), equivalently from the van Oosterom--Strackee triangle
sum~(\ref{eq:vos}); the regularized value $\Omega(\epsilon\uhat)$ is computed the same way for
the closed curve $\nhat_{\epsilon}$, with the polar and azimuthal angles
$(\theta_{\epsilon},\phi_{\epsilon})$ obtained by projecting onto the frame $(\aph,\Bh)$ and
applying $\atantwo$. The Uhlmann value $\gamma_{U}(T)$ is obtained by integrating the
connection of section~\ref{sec:uhlmann} at temperature $T$. Convergence is monitored by
halving $\epsilon$ from $10^{-1}$ to $10^{-3}$.

\emph{Main theorem.} For Field~A, $\Omega[C]=+0.4485\pi$, and at $\epsilon=0.002$ the
measured deviation $\Omega(+\epsilon\uhat)-\Omega[C]$ tracks $2\alpha$ across
directions---for $\uhat=\hat{\mathbf z},\hat{\mathbf x},(1,-0.6,0.5),\aph,\Bh$ the predicted
$2\alpha=-0.612\pi,\allowbreak +0.388\pi,\allowbreak +0.130\pi,\allowbreak 0,\allowbreak +1.000\pi$ against measured
$-0.610\pi,\allowbreak +0.386\pi,\allowbreak +0.129\pi,\allowbreak -0.001\pi,\allowbreak +0.997\pi$, respectively. Halving $\epsilon$ from
$0.02$ to $0.00125$ reduces the error from $0.015\pi$ to $0.001\pi$, by a factor $\approx1.8$
per halving, consistent with the $O(\epsilon)$ correction in equation~(\ref{eq:master}).

\emph{Three scales.} For $\epsilon=10^{-4}$, $\uhat=\hat{\mathbf z}$ the predicted transition
scale is $s_{\phi}=\sqrt{2\epsilon u/|\aperp|}=0.0107$; numerically $\phi_{\epsilon}\approx0$
for $|s|=0.1,0.03$, $\phi_{\epsilon}\approx\alpha$ for $|s|=0.001,0.0003$, and
$\phi_{\epsilon}\approx\alpha/2$ near $|s|\approx s_{\phi}$, while the weight is $2.000$ for
$s<0$ and below $1.4\times10^{-3}$ for $s>0$, confirming the weight asymmetry
of~(\ref{eq:weights}).

\emph{Corollaries.} At $\epsilon=0.002$ the difference
$|\gamma_{g}(+\epsilon\uhat)-\gamma_{g}(-\epsilon\uhat)|=0.995\pi,0.996\pi,0.997\pi$ for
$\uhat=\hat{\mathbf z},\hat{\mathbf x}$ and a tilted direction, confirming
Corollary~\ref{cor:pi}. For Field~C, $\gamma_{g}(\pm\epsilon\uhat)=\pm0.383\pi\to\pm0.489\pi$
as $\epsilon=0.2\to0.01$, converging to $\pm\pi/2$ (Corollary~\ref{cor:reflection}), with the
sum vanishing to machine precision.

\emph{Uhlmann selection.} At $T=0.02$ the values $\gamma_{g}(+\epsilon\aph)$ and
$\gamma_{U}$ agree to within $4\times10^{-4}\pi$ for Field~A, $10^{-4}\pi$ for Field~B, and
exactly for Field~C; halving $\epsilon$ from $0.01$ to $0.002$ reduces the difference from
$1.7\times10^{-3}$ to $4\times10^{-4}$, confirming equation~(\ref{eq:select}). The code reproducing
these data is available from the authors.

\ifdef{\ClassIOPJournal}{\bibliographystyle{iopart-num}}{\bibliographystyle{unsrt}}
\bibliography{OpenPath_S2_refs}

\end{document}